\documentclass[conference]{IEEEtran}

\usepackage{amsmath, amsthm, amssymb}
\usepackage{latexsym}
\usepackage{graphicx}
\usepackage{verbatim}
\usepackage{mathrsfs}
\usepackage{float}
\usepackage[lofdepth, lotdepth]{subfig}
\usepackage{array}
\newcolumntype{P}[1]{>{\centering\arraybackslash}p{#1}}

\usepackage{url}
\usepackage{algorithm}
\usepackage[noend]{algorithmic}
\usepackage{cite}
\usepackage[table]{xcolor}
\usepackage{enumerate}
\usepackage{eucal}
\usepackage{stmaryrd}
\usepackage{mathtools}
\usepackage{pgf}
\usepackage{tikz}
\usetikzlibrary{arrows,automata}
\usepackage[latin1]{inputenc}
\usetikzlibrary{automata,positioning}

\usepackage[top=0.75in, bottom=1in, left=0.625in, right=0.625in]{geometry}
\usepackage{mathtools}
\DeclarePairedDelimiter\ceil{\lceil}{\rceil}



\theoremstyle{plain}
\newtheorem{theorem}{Theorem}

\newtheorem{lemma}{Lemma}

\newtheorem{construction}{Construction}
\theoremstyle{definition}
\newtheorem{definition}{Definition}
\newtheorem{example}{Example}
\newtheorem{remark}{Remark}


\newcommand{\B}{{\mathcal B}}
\newcommand{\C}{{\mathcal C}}



\DeclareMathAlphabet{\mathbfsl}{OT1}{ppl}{b}{it} 

\newcommand{\by}{{\mathbfsl y}}

\newcommand{\bc}{{\mathbfsl c}}

\newcommand{\bx}{{\mathbfsl{x}}}
\newcommand{\bz}{{\mathbfsl{z}}}



\newcommand{\bbZ}{{\mathbb Z}}









\newcommand{\ppmod}[1]{~({\rm mod~}#1)}


\renewcommand{\ge}{\geqslant}
\renewcommand{\le}{\leqslant}

\newcommand{\et}{{\emph{et al.}}}

\newcommand{\enc}{\textsc{Enc}}
\newcommand{\dec}{\textsc{Dec}}

\newcommand{\Bedit}{{\cal B}^{\rm edit}}


\IEEEoverridecommandlockouts
\begin{document}

\pagestyle{empty}




\title{Every Bit Counts: A New Version of Non-binary VT Codes with More Efficient Encoder\\[-3mm]}



\author{\IEEEauthorblockN{Tuan Thanh Nguyen\IEEEauthorrefmark{1},
Kui Cai\IEEEauthorrefmark{1},
and Paul H. Siegel\IEEEauthorrefmark{2}}\\[-3mm]
\IEEEauthorblockA{
\IEEEauthorrefmark{1}%
Singapore University of Technology and Design, Singapore 487372\\
\IEEEauthorrefmark{2}%
University of California, San Diego, La Jolla, CA 92093, USA\\
Emails: \{tuanthanh\_nguyen, cai\_kui\}@sutd.edu.sg, psiegel@ucsd.edu\\[-4mm]
}
}
\maketitle

\hspace{-3mm}\begin{abstract}

In this work, we present a new version of non-binary VT codes that are capable of correcting a single deletion or single insertion. Moreover, we provide the first-known linear-time algorithms that encode user messages into these codes of length $n$ over the $q$-ary alphabet for $q \ge 2$ with at most $\ceil{\log_q n}+1$ redundant symbols, while the optimal redundancy required is at least $\log_q n+\log_q (q-1)$ symbols. Our designed encoder reduces the redundancy of the best known encoder of Tenengolts (1984) by at least  $2+\log_q(3)$ redundant symbols, or equivalently $2\log_2 q+3$ redundant bits.

\end{abstract}

\section{Introduction}

Codes correcting deletions and insertions are important for many data storage systems such as the bit-patterned media magnetic recording systems \cite{a1} and racetrack memory devices \cite{a2}.  Insertions and deletions may also occur due to the synchronization errors in communication systems \cite{a3} and mobile data \cite{a4}. Furthermore, the problem of correcting such errors has recently received significantly increased attention due to the DNA-based data storage technology, which suffers from deletions and insertions with extremely high probability \cite{O:2015, Heckel:2019, Nguyen:2021, TT:special,ryan:2022}. 
Designing codes for correcting deletions or insertions is well-known to be a challenging problem, even in the most fundamental settings with only a single error. 

Over the $q$-ary alphabet, $q\ge 2$, consider a channel model that suffers from at most one deletion or one insertion, and suppose that the optimal redundancy required to correct such errors is ${\rm r}_{\rm opt}$, then two crucial coding theory problems are: 
\vspace{1mm}

\noindent {\bf P1: Code Design.} Can one design the largest possible code $\C$, with the redundancy ${\rm r}_{\C}$, such that ${\rm r}_{\C} \to {\rm r}_{\rm opt}$? 
\vspace{1mm}

\noindent {\bf P2: Encoder/Decoder Design.} Can one design an efficient encoder $\enc$ (and a corresponding decoder $\dec$) that encodes arbitrary user messages into codewords in $\C$ with nearly-optimal redundancy ${\rm r}_{\enc}$, ${\rm r}_{\enc} \to {\rm r}_{\C}$?
\vspace{1mm}

While the problems of giving nearly-optimal explicit constructions of codes (P1) and designing nearly-optimal encoders for such codes (P2) over the binary alphabet have been settled for more than 50 years, the approach fails to be extended to the case of $q$-ary alphabet for any fixed $q>2$ (refer to Table~\ref{limitation} for a summary of literature results). In particular, to correct a single deletion or single insertion, we have the celebrated class of Varshamov-Tenengolts (VT) codes. In 1965, Varshamov and Tenengolts introduced the binary VT codes to correct asymmetric errors \cite{VT:1965}, and Levenshtein subsequently showed that such codes can be used for correcting a deletion or insertion with a simple linear-time decoding algorithm \cite{Le:1965}. For codewords of length $n$, the binary VT codes incur $\log_2(n+1)$ redundant bits, while the optimal redundancy, provided in \cite{Le:1965}, is at least $\log_2 n$ bits. Curiously, even though the binary VT codes and efficient decoding algorithm was known since 1965, a linear-time encoder for such codes was only proposed by Abdel-Ghaffar and Ferriera in 1998 \cite{kas:1998}, which used $\ceil{\log (n+1)}$ redundant bits. We observe that, over the binary alphabet, (P1) and (P2) are solved asymptotically optimal: 
\begin{equation*}
{\rm r}_{\rm opt} \ge \log_2 n, \text{ } {\rm r}_{\C}=\log_2 (n+1), \text{and } {\rm r}_{\enc}=\ceil{\log_2 (n+1)}.
\end{equation*}

For the non-binary alphabet, in 1984, a non-binary version of the VT codes was proposed by Tenengolts \cite{Tene:1984}, and the constructed codes can correct a single deleted or inserted symbol in the $q$-ary alphabet with a linear-time decoder for any $q>2$. The construction of Tenengolts retains the attractive properties of the binary VT codes, such as the simple decoding algorithm. For codewords of length $n$, such codes incur at most $\log_q n +1$ redundant symbols. In the same paper, Tenengolts also provided an upper bound for the cardinality of any $q$-ary codes of length $n$ correcting a deletion or insertion, which is at most $q^n/(q-1)n$, and hence, the minimum redundancy required is at least $\log_q n+\log_q (q-1)$ symbols. Unlike the binary case, designing an efficient encoder that encodes arbitrary user messages into Tenengolts' code is a challenging task (refer to Section III-A for detailed discussion). To overcome the challenge, several attempts have been made in three variations:
\begin{itemize}
\item {\em Targeting a specific value of $q$}. When $q=4$, 
Chee \et{} \cite{chee:2019} presented a linear-time quaternary encoder that corrects a single deletion or insertion with $\ceil{\log_4 n}+1$ redundant symbols. The redundancy is asymptotically optimal. Unfortunately, the approach fails to be extended to the case of $q$-ary alphabet for arbitrary $q>2$. 
\begin{table}[h!]
\centering 
 \begin{tabular}{ |P{0.9cm}|P{2.3cm}| P{1.8cm}| P{2.1cm}|}
 \hline
Alphabet size &  Optimal Redundancy &  Redundancy of the Largest Code $\C$ &  Redundancy of the best encoder for $\C$  \\[1ex]
 \hline
Binary $\Sigma_2$ & ${\rm r}_{\rm opt}\ge\log_2n$   & $\log_2(n+1)$  Levenshtein \cite{Le:1965} &  $\ceil{\log_2(n+1)}$  Abdel-Ghaffar and Ferriera \cite{kas:1998}  \\
 \hline
Non-binary $\Sigma_q$  & ${\rm r}_{\rm opt}\ge\log_q n+\log_q(q-1)$   & $\log_q n+1$  Tenengolts \cite{Tene:1984} &  $>\log_q n+\log_2 n$  Abroshan \et{} \cite{M:2018} \\
 \hline
\end{tabular}
\caption{Related works for binary/non-binary codes correcting a single deletion or single insertion. Redundancy is measured in bits for binary codes, and in symbols for non-binary codes.} 
\label{limitation}
\end{table}
\item {\em Using more redundancy.} Abroshan \et{} \cite{M:2018} presented a systematic encoder that maps user messages into a single $q$-ary VT code as constructed in \cite{Tene:1984} with complexity that is linear in the code length. Unfortunately, the redundancy of this encoder is more than $\log_q n+\log_2 n$ symbols (see Section II). 

\item {\em Relaxing the condition for output codewords}. In \cite{Tene:1984}, Tenengolts provided a systematic encoder that requires $\ceil{\log_q n}+3+\ceil{\log_q 3}$ symbols, which is the best-known encoder for codes that correct a single deletion or insertion. In term of redundancy, a natural question is: can one construct a linear-time encoder with at most $r$ redundant symbols, where $\log_q n+\log_q (q-1) \le r < \ceil{\log_q n}+3+\ceil{\log_q 3}$? In addition, The drawback of the encoder in \cite{Tene:1984} is that the codewords obtained from this encoder are not contained in a single $q$-ary VT code. Note that to correct a single deletion or insertion, it is not necessary that all the codewords must belong to the same coset of $q$-ary VT codes. Nevertheless, when the words share the same parameters, Abroshan \et{} \cite{M:2018} demonstrated that these codes can be adapted to correct multiple insertion/deletion errors, in the context of {\em segmented edits}  \cite{M:segment, Liu:2010, Cai:segment}.   
\end{itemize}

Motivated by the code design problem above, we present a new version of non-binary VT codes that give asymptotically optimal solutions for (P1) and (P2), as best as over binary alphabet, summarized as follows:
\begin{small}
 \begin{equation*}
{\rm r}_{\rm opt} \ge \log_q n+\log_q(q-1), {\rm r}_{\C}=\log_q n +1, \text{and } {\rm r}_{\enc}=\ceil{\log_q n}+1.
\end{equation*}
\end{small}
\noindent Our code construction method supports both systematic and non-systematic linear-time encoders that encode user messages into these codes. Our constructed codes have the same cardinality and redundancy, as compared to the best known $q$-ary single deletion/insertion codes constructed by Tenengolts \cite{Tene:1984}. On the other hand, our proposed code construction method supports more efficient encoding and decoding procedures (in other words, it enables an easier method to solve (P2)). Consequently, our best encoder uses at most $\ceil{\log_q n}+1$ redundant symbols, and hence, it reduces the redundancy of the best known encoder of Tenengolts \cite{Tene:1984} by at least $2+\log_q(3)$ redundant symbols, or equivalently $2\log q+3$ redundant bits.  




\section{Preliminary}\label{sec:prelim}

Let $\Sigma_q$ denote an {\em alphabet} of size $q$.
For any positive integer $m<n$, we let $[m,n]$ denote the set $\{m,m+1,\ldots,n\}$ and $[n]=[1,n]$.

Given two sequences $\bx$ and $\by$, we let $\bx\by$ denote the {\em concatenation} of the two sequences.
In the special case where $\bx,\by \in \Sigma_q^n$, we use $\bx || \by$ to denote their {\em interleaved sequence} $x_1y_1x_2y_2\ldots x_ny_n$.
For a subset $I=\{i_1,i_2,\ldots, i_k\}$ of coordinates, we use $\bx|_I$ to denote the vector $x_{i_1}x_{i_2}\ldots x_{i_k}$. A sequence $\by$ is said to be a {\em subsequence} of $\bx$, if there exists a subset of coordinates $I$ such that $\by=\bx|_I$. Let $\bx\in \Sigma_q^n$. We define the error ball $\B^{\rm InDel}(\bx)$ to be the set of all sequences $\by$ that can be obtained from $\bx$ via a single insertion or single deletion. 
\begin{definition}
Let $\C\subseteq \Sigma_q^n$. We say that $\C$ corrects a single insertion or single deletion error if 
and only if $\B^{\rm InDel}(\bx)\cap \B^{\rm InDel}(\by)=\varnothing$ for all distinct $\bx,\by \in \C$.
\end{definition}
For a code $\C \subseteq \Sigma_q^n$, the code rate is measured by the value ${\rm r}_{\C}=\log_q |\C|/n$.
In this work, not only are we interested in constructing large error-correction codes, we desire efficient encoder that maps arbitrary user data into these codes. 

\begin{definition}
The map $\enc: \Sigma_q^k\to \Sigma_q^n$
is a {\em single-deletion-insertion-encoder} if there exists a {\em decoder} map $\dec:\Sigma_q^{n+1}\cup \Sigma_q^n \cup \Sigma_q^{n-1} \to \Sigma_q^k$ such that the following conditions hold:
\begin{itemize} 
\item For all $\bx\in\Sigma_q^k$, we have $\dec\circ\enc(\bx)=\bx$,
\item If $\bc=\enc(\bx)$ and $\bc'\in \B^{\rm InDel}(\bc)$, then $\dec(\bc')=\bx$.
\end{itemize}
Hence, we have that the code $\C=\{\bc : \bc=\enc(\bx),\, \bx\in\Sigma_q^k\}$ and $|\C|=q^k$.
The {\em redundancy of the encoder} is measured by the value $n-k$ (in symbols) or $(n-k)\log_2 q$ (in bits).
\end{definition}

We now introduce the binary VT codes \cite{VT:1965, Le:1965}. 

\begin{definition} The {\em VT syndrome} of a binary sequence $\bx\in\{0,1\}^n$ is defined to be
${\rm Syn}(\bx)=\sum_{i=1}^n i x_i$.
\end{definition}


\begin{construction}[Binary VT codes \cite{VT:1965}]\label{cons1} 
For $a \in  \bbZ_{n+1}$, let
\begin{equation}\label{VTcodes}
{\rm VT}_a(n)=\left\{\bx\in \{0,1\}^n: {\rm Syn}(\bx) = a \ppmod{n+1}\right\}.
\end{equation}  
\end{construction}

\begin{theorem}[Levenshtein, 1965 \cite{Le:1965}]
For $a \in  \bbZ_{n+1}$, ${\rm VT}_a(n)$ can correct a single deletion or a single insertion. There exists $a\in\bbZ_{n+1}$ such that ${\rm VT}_a(n)$ has at least $2^n/(n+1)$ codewords, and the redundancy of the code is at most $\log_2 (n+1)$ bits.
\end{theorem}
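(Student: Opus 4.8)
The plan is to separate the two claims. The cardinality bound is immediate from a pigeonhole argument: the map $\bx \mapsto {\rm Syn}(\bx) \bmod (n+1)$ partitions $\{0,1\}^n$ into the $n+1$ disjoint classes ${\rm VT}_0(n), \ldots, {\rm VT}_n(n)$, so at least one residue $a$ satisfies $|{\rm VT}_a(n)| \ge 2^n/(n+1)$. The redundancy of that code is then $n - \log_2 |{\rm VT}_a(n)| \le n - \log_2\!\big(2^n/(n+1)\big) = \log_2(n+1)$ bits, as claimed.

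For the correction property, I would first reduce single-insertion correction to single-deletion correction: the balls $\B^{\rm InDel}(\bx)$ and $\B^{\rm InDel}(\by)$ intersect if and only if $\bx$ and $\by$ share a common subsequence of length $n-1$, so it suffices to prove that distinct codewords of ${\rm VT}_a(n)$ have disjoint single-deletion balls, equivalently that a decoder can recover $\bx \in {\rm VT}_a(n)$ from any length-$(n-1)$ subsequence $\by$ together with the known residue $a$.

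The heart of the proof is an explicit decoder. Deleting the symbol $x_p$ from $\bx$ and writing $\by$ for the result, a direct computation gives
\[
{\rm Syn}(\bx) - {\rm Syn}(\by) = p\,x_p + \sum_{i>p} x_i,
\]
since every coordinate to the right of position $p$ has its index reduced by one. Set $\Delta = (a - {\rm Syn}(\by)) \bmod (n+1)$ with $0 \le \Delta \le n$, and let $w = \weight(\by)$ be the number of ones in $\by$. A quick check shows the right-hand side never exceeds $n$, so $\Delta$ equals that value with no modular wraparound. When a zero is deleted ($x_p = 0$) the identity reads $\Delta = \sum_{i>p} x_i \le w$, the number of ones lying to the right of the deleted coordinate; when a one is deleted ($x_p = 1$) a short count of the zeros and ones on each side of position $p$ rewrites the identity as $\Delta = w + 1 + (\text{number of zeros strictly left of } p)$, so that $\Delta \ge w+1$.

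The decisive step, and the one I expect to require the most care, is to verify that this dichotomy is both exhaustive and reversible. The two regimes $\Delta \le w$ and $\Delta \ge w+1$ are disjoint, so the decoder first reads off whether a zero or a one was deleted. In the deleted-zero case it reinserts a $0$ immediately to the left of the block carrying the rightmost $\Delta$ ones; in the deleted-one case it reinserts a $1$ immediately after the leftmost $\Delta - w - 1$ zeros. I would then argue that each prescription yields a unique codeword, checking that runs of equal symbols make the insertion point well defined (shifting the inserted bit within a run does not change $\by$, and the count of ones or zeros pins down the correct run), and that the reconstructed word indeed has syndrome $a$. Establishing this uniqueness, including the boundary cases $\Delta = 0$, $\Delta = w$, and insertions at the ends of $\by$, is the main obstacle; once it is in place, the decoder inverts every single deletion, the deletion balls are pairwise disjoint, and the correction claim follows.
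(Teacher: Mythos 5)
Your proposal is correct, and it is essentially the classical argument: the paper itself states this theorem without proof, citing Levenshtein \cite{Le:1965}, so the relevant comparison is with the standard proof and with the paper's Theorem 3, whose decoder for the new $q$-ary codes generalizes exactly your template. Your pigeonhole/redundancy computation is the standard one, your syndrome identity ${\rm Syn}(\bx)-{\rm Syn}(\by)=p\,x_p+\sum_{i>p}x_i$ and the resulting dichotomy $\Delta\le w$ (deleted zero, $\Delta$ counts ones to the right) versus $\Delta=w+1+(\text{zeros left of }p)\ge w+1$ (deleted one) are exactly Levenshtein's decoder, and your observation that the value never exceeds $n$, so no modular wraparound occurs, is the key point that makes $\Delta$ recoverable. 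The run-ambiguity resolution you flag as the main obstacle is handled precisely as you sketch (all insertion positions with the prescribed count lie in a single run and yield the same string, hence the same codeword as the true $\bx$); the paper makes the analogous remark in its Theorem 3 proof. The one step you assert rather than prove --- that the single-insertion and single-deletion ball-intersection conditions are equivalent, i.e.\ that two length-$n$ words share a common supersequence of length $n+1$ iff they share a common subsequence of length $n-1$ --- is itself a lemma of Levenshtein, which the paper likewise invokes by citation (\cite{del-in}) rather than proof, so this is an acceptable appeal to a standard fact rather than a gap.
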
 


Over the nonbinary alphabet, in 1984, Tenengolts \cite{Tene:1984} generalized the binary VT codes to $q$-ary VT codes for any fixed $q$-ary alphabet. Crucial to the construction of Tenengolts in \cite{Tene:1984} was the concept of the {\em signature vector} defined as follows. 

\begin{definition} The {\em signature vector} of a $q$-ary vector $\bx$ of length $n$ 
is a binary vector $\alpha(\bx)$ of length $n-1$, 
where $\alpha(x)_i=1$ if $x_{i+1}\geq x_i$, and $0$ otherwise, for $i\in[n-1]$.
\end{definition}

\begin{construction}[$q$-ary VT codes as proposed in \cite{Tene:1984}]\label{cons2} 
Given $n,q>0$, for $a\in \bbZ_n$ and $b\in \bbZ_q$, set 
{
\begin{align*}
\label{qaryVT}
    {\rm T}_{a,b}({n;q}) \triangleq \Big\{ \bx \in \bbZ_q^n : &\alpha(\bx)\in{\rm VT}_a(n-1) \text{ and } \\
   & \sum_{i=1}^n x_i = b\ppmod{q} \Big\}.
\end{align*}  
}
\end{construction}

\begin{theorem}[Tenengolts, 1984 \cite{Tene:1984}]
The set ${\rm T}_{a,b}(n;q)$ forms a $q$-ary single deletion/insertion correction code and 
there exists $a$ and $b$ such that the size of ${\rm T}_{a,b}({n;q})$ is at least $q^n/(qn)$. There exists a systematic encoder $\enc_{\rm T}$
with redundancy $\ceil{\log_2 n} + 3\ceil{\log_2 q} +3$ (bits) or $\ceil{\log_q n} + 3+ \ceil{\log_q 3}$ (symbols).
\end{theorem}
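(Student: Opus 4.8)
The plan is to establish the three assertions separately: the single-deletion/insertion-correction property, the cardinality lower bound, and the existence of the systematic encoder with the stated redundancy.

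For the correction property, I would first reduce to the single-deletion case, using the standard equivalence that disjointness of the single-deletion balls is equivalent to disjointness of the combined balls $\B^{\rm InDel}$, so it suffices to recover each $\bx \in {\rm T}_{a,b}(n;q)$ from an arbitrary length-$(n-1)$ subsequence $\by$. The \emph{value} of the deleted symbol is recovered at once from the sum constraint as $v \equiv b - \sum_i y_i \pmod{q}$. The heart of the argument is a structural lemma about the signature: deleting $x_j$ replaces the two bits $\alpha(\bx)_{j-1}$ (the comparison $x_{j-1}\!:\!x_j$) and $\alpha(\bx)_j$ (the comparison $x_j\!:\!x_{j+1}$) by the single bit comparing $x_{j-1}$ with $x_{j+1}$; a short case analysis on the values of these two bits shows that the new bit always coincides with one of them, so $\alpha(\by)$ is in every case obtained from $\alpha(\bx)$ by a \emph{single binary deletion} (the boundary cases $j=1,n$ being immediate). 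Consequently the Levenshtein decoder for ${\rm VT}_a(n-1)$, applied to $\alpha(\by)$ with syndrome $a$, localizes the deleted signature bit; combining this localization with the known value $v$ and the local run structure of $\by$ then pins down the exact symbol position and reconstructs $\bx$. Verifying that the residual VT run-ambiguity is always resolved by $v$ is the technical crux of this part.

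For the cardinality bound, I would observe that the map $\bx \mapsto \bigl({\rm Syn}(\alpha(\bx)) \bmod n,\ \sum_i x_i \bmod q\bigr)$ assigns to every $\bx \in \bbZ_q^n$ exactly one pair $(a,b) \in \bbZ_n \times \bbZ_q$, so the $nq$ codes ${\rm T}_{a,b}(n;q)$ partition $\bbZ_q^n$. Since $|\bbZ_q^n| = q^n$, the pigeonhole principle forces at least one class to satisfy $|{\rm T}_{a,b}(n;q)| \ge q^n/(nq)$.

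The systematic encoder is the substantive part, and I expect it to be the main obstacle. The difficulty, as the introduction stresses, is that the VT constraint is imposed on the signature, a \emph{nonlinear} function of adjacent symbols, so one cannot simply append parity symbols as for a linear code: choosing the redundancy symbols perturbs the very signature bits whose syndrome we are trying to control. My plan is to reserve a fixed block of positions for redundancy and to encode into it (i) the target VT syndrome of the full signature, costing $\approx \ceil{\log_2 n}$ bits, and (ii) the sum residue modulo $q$, costing $\ceil{\log_2 q}$ bits, while (iii) inserting a constant-size \emph{marker} pattern that seals the junction between the data block and the redundancy block so that the signature bits contributed at that interface are predictable and the desired syndrome is hit exactly by a computable choice of redundancy symbols. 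Tallying these three contributions gives the stated $\ceil{\log_2 n} + 3\ceil{\log_2 q} + 3$ bits (equivalently $\ceil{\log_q n} + 3 + \ceil{\log_q 3}$ symbols). The delicate point, and where I expect the real work to lie, is arranging the marker and the syndrome-carrying symbols so that the message-to-redundancy map is well-defined, runs in linear time, and leaves the user data recoverable as a clean subsequence of the codeword.
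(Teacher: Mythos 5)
This theorem is quoted from Tenengolts (1984) and the paper supplies no proof of its own, so the only fair comparison is with the original argument, which your reconstruction follows exactly: the key lemma that deleting a symbol of $\bx$ induces precisely one binary deletion in the signature $\alpha(\bx)$ (your four-case check, including that the new comparison bit always coincides with one of the two bits it replaces, is correct), recovery of the deleted value from the mod-$q$ sum and of its position via Levenshtein decoding of ${\rm VT}_a(n-1)$, the pigeonhole argument over the $qn$ syndrome classes partitioning $\bbZ_q^n$, and a marker-sealed systematic encoder whose bit tally $\ceil{\log_2 n}+\ceil{\log_2 q}+\bigl(2\ceil{\log_2 q}+3\bigr)$ matches the stated redundancy. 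Your outline is sound, correctly flags the genuine technical crux (showing the run ambiguity in reinserting the recovered symbol always yields the same codeword), and is consistent with the paper's own sketch of its analogous encoder $\enc_1$, which likewise uses a two-symbol marker $pp$ and a three-bit comma $011$.
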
 

On the other hand, the codewords obtained from the encoder $\enc_{\rm T}$ are not contained in a single $q$-ary VT code ${\rm T}_{a,b}(n;q)$. Recently, Abroshan \et{} \cite{M:2018} presented a systematic encoder that maps binary messages into ${\rm T}_{a,b}(n;q)$. Unfortunately, the redundancy of the encoder is as large as $\log_2 n(\log_2 q+1)+2(\log_2 q-1)$ bits, and hence, more than $\log_2 n+\log_q n$ symbols.

\subsection{Paper Organisation and Main Contributions}
In Section III, we present a new version of non-binary VT codes that are capable of correcting a single deletion or a single insertion. Our decoding algorithm may be considered simpler than Tenengolts' method, and more importantly, our proposed code construction method supports more efficient encoding and decoding procedures. 

In Section IV, we present a linear-time encoder that encodes user messages into the codes constructed in Section III. For codewords of length $n$ over the $q$-ary alphabet, our designed encoder uses at most $\ceil{\log_q n} + 1$ redundant symbols. Our encoder can also enable more efficient design of the non-binary segmented edits correcting codes. The efficiency of our proposed encoders, compared to previous works, is illustrated in Table II.

\begin{table*}[h!]
\centering 
 \begin{tabular}{ |P{3cm}|P{3cm}| P{2.5cm}| P{2.5cm}| P{2cm}| P{2cm}| P{1cm}|}
 \hline
 Encoder &  Redundancy (in symbols) &  Encoding/Decoding Complexity &  Receiver Information on Code's Parameters &  Encoder Output &  Remark  \\[1ex]
 \hline
 Encoder $\enc_{\rm T}$ proposed by Tenengolts \cite{Tene:1984} using ${\rm T}_{a,b}(n;q)$  & {\color{black}{$\ceil{\log_q n}+3+\ceil{\log_q 3}$}}   &  {\color{blue}{$O(n)$}} &  {\color{black}{not available}} & {\color{black}{not in ${\rm T}_{a,b}(n;q)$}} &  {\color{blue}{systematic}}  \\
 \hline
  Encoder $\enc_{\rm A}$ proposed by Abroshan \et{} \cite{M:2018} using ${\rm T}_{a,b}(n;q)$  &  {\color{black}{ $> \log_q n+\log_2 n$}}   &  {\color{blue}{$O(n)$}} &  {\color{blue}{VT Syndrome (a) and parity check (b)}} &   {\color{blue}{in ${\rm T}_{a,b}(n;q)$}} &  {\color{blue}{systematic}}  \\
 \hline
Encoder $\enc_{1}$ proposed in this work using ${\rm VT}^*_{a}(n;q)$   &   {\color{black}{$\ceil{\log_q n}+3+\ceil{\log_q 3}$}}   &  {\color{blue}{$O(n)$}} & {\color{black}{not available}} & {\color{black}{not in ${\rm VT}^*_{a}(n;q)$}} & {\color{blue}{systematic}}  \\
 \hline
Encoder $\enc_{2}$ proposed in this work using ${\rm VT}^*_{a}(n;q)$   &   {\color{blue}{$\ceil{\log_q n}+1$}}   &  {\color{blue}{$O(n)$}} &  {\color{blue}{VT Syndrome (a) and parity check (a)}} &  {\color{blue}{in ${\rm VT}^*_{a}(n;q)$}} & {\color{black}{non-systematic}}  \\
\hline
\end{tabular}
\caption{Efficient encoders for $q$-ary codes correcting single deletion or insertion proposed in this work and and those in literature. For each design category, the most desirable option is highlighted in blue. Particularly, our proposed encoder $\enc_2$ incurs the least redundancy of $\ceil{\log_q n}+1$ symbols. Here, the receiver information on code's parameters plays an important role in error-detecting and error-correcting procedure. For example, it may provide more efficient basis for the design of segmented deletion/insertion correcting codes (see \cite{M:segment, Liu:2010, Cai:segment}).}
\label{compare}
\end{table*}


\section{A New Version of $q$-ary VT Codes}

Note that any code that corrects $k$ deletions if and only if it can correct $k$ insertions, as established by Levenshtein \cite{del-in}. Therefore, for simplicity, throughout this paper, we present the decoding algorithm to correct a deletion only. Crucial to our construction is the concept of $q$-ary {\em differential vector}. 

\begin{definition}
Given $\bx \in \Sigma_q^n$. The {\em differential vector} of $\bx$, denoted by ${\rm Diff}(\bx)$, is a sequence $\by={\rm Diff}(\bx) \in \Sigma_q^n$ where: 
\begin{equation*}
\left\{ \begin{array}{ll}
y_i &=x_i - x_{i+1} \ppmod{q} \mbox{, for } 1\le i\le n-1, \\ 
y_n &=x_n.
\end{array}\right.
\end{equation*}
\end{definition}

Clearly, ${\rm Diff}(\bx)$ is one-to-one. From $\by={\rm Diff}(\bx)$, we can obtain $\bx= {\rm Diff}^{-1}(\by)$ as follows.
\begin{equation*}
\left\{ \begin{array}{ll}
x_n &= y_n, \mbox{ and }\\ 
x_i &= \sum_{j=i}^n y_j \ppmod{q} \mbox{, for } n-1\ge i\ge 1.
\end{array}\right.
\end{equation*}

\subsection{A Natural Idea from Binary VT Codes}

Recall the design of the binary VT codes ${\rm VT}_a(n)$ from Construction 1 to correct a single deletion or insertion. A natural question is whether there exists a simple VT syndrome over $q$-ary codewords to correct single deletion or insertion for arbitrary $q>2$. Observe that, in the construction of Tenengolts \cite{Tene:1984} (refer to Construction 2), the VT syndrome is enforced over the signature of each codeword, which is a binary sequence. That is a drawback leading to the difficulty of designing an efficient encoder as in the binary case. Consequently, to encode arbitrary messages into ${\rm T}_{a,b}(n;q)$ by enforcing the VT syndrome over the binary signature sequences, Abroshan \et{} \cite{M:2018} required more than $\log_q n+\log_2 n$ redundant symbols. A natural solution should be obtained by enforcing a single VT syndrome over all $q$-ary sequences. 

On the other hand, we observe that, imposing VT syndrome directly over every $q$-ary codeword is not sufficient to correct a deletion or insertion. For example, it is easy to verify that the following two sequences $\bz1 = \bx213\by$ and $\bz2 = \bx132\by$, where $\bx, \by$ are arbitrary sequences, have the same VT syndrome, however, share a common sequence in the single error ball as $\bz' = \bx13\by$.

Surprisingly, imposing the VT syndrome over the differential vector of every $q$-ary codeword allows us to correct a deletion or an insertion, and that is the main contribution of this work. 

\subsection{Codes Construction}

\begin{lemma}\label{trans-lemma1}
Given $\bx \in \Sigma_q^n$ and let $\by={\rm Diff}(\bx) \in \Sigma_q^n$. Suppose that $\bx'$ is obtained via $\bx$ by a deletion at symbol $x_i$ for $1\le i\le n$. We then have:
\begin{enumerate}[(i)]
\item If $2\le i\le n$, then $y_{i-1} y_{i}$ is replaced by $y_{i-1} + y_i \ppmod{q}$,  
\item If $i=1$, then $y_1$ is deleted in ${\rm Diff}(\bx)$. 
\end{enumerate}
\end{lemma}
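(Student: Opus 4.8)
The plan is to compute ${\rm Diff}(\bx')$ directly from its definition and compare it coordinate-by-coordinate with $\by={\rm Diff}(\bx)$, the whole argument resting on a single telescoping identity. First I would fix notation: write $\by'={\rm Diff}(\bx')$, a sequence of length $n-1$, and record how the coordinates of $\bx'$ relate to those of $\bx$. Since $\bx'$ is $\bx$ with the symbol $x_i$ removed, we have $x'_j=x_j$ for $1\le j\le i-1$ and $x'_j=x_{j+1}$ for $i\le j\le n-1$. Everything then reduces to substituting these into the two-part formula $y'_j=x'_j-x'_{j+1}\ppmod{q}$ (for $j\le n-2$) and $y'_{n-1}=x'_{n-1}$.

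For part (ii), with $i=1$, every coordinate is shifted: $x'_j=x_{j+1}$ for all $j\in[n-1]$, so $y'_j=x_{j+1}-x_{j+2}=y_{j+1}$ for $j\le n-2$, while the boundary term gives $y'_{n-1}=x'_{n-1}=x_n=y_n$. Hence $\by'=(y_2,\ldots,y_n)$, which is precisely $\by$ with $y_1$ deleted, as claimed.

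For part (i) in the interior range $2\le i\le n-1$, the coordinates of $\by'$ split into three regimes. For $j\le i-2$ the deletion has no effect, so $y'_j=x_j-x_{j+1}=y_j$; for $i\le j\le n-2$ the shift gives $y'_j=x_{j+1}-x_{j+2}=y_{j+1}$, and the boundary term again yields $y'_{n-1}=x_n=y_n$. The only interesting coordinate is $j=i-1$, where $y'_{i-1}=x'_{i-1}-x'_i=x_{i-1}-x_{i+1}$. The key step is the telescoping identity $y_{i-1}+y_i=(x_{i-1}-x_i)+(x_i-x_{i+1})=x_{i-1}-x_{i+1}\ppmod{q}$, which shows $y'_{i-1}=y_{i-1}+y_i\ppmod{q}$. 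Thus $\by'$ is $\by$ with the consecutive pair $y_{i-1}y_i$ merged into the single symbol $y_{i-1}+y_i$.

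The one place that needs separate care — and the only genuine subtlety — is the endpoint $i=n$, since the differential vector uses the non-uniform convention $y_n=x_n$ in its last coordinate. Here $\bx'=x_1\cdots x_{n-1}$, so $y'_j=y_j$ for $j\le n-2$ while $y'_{n-1}=x'_{n-1}=x_{n-1}$; I would then verify that the same merging rule persists because $y_{n-1}+y_n=(x_{n-1}-x_n)+x_n=x_{n-1}\ppmod{q}$, confirming that $i=n$ also obeys case (i). Apart from this boundary bookkeeping, the argument is a routine index chase, and I expect no real obstacle beyond keeping the three coordinate regimes and the two distinct last-coordinate conventions straight.
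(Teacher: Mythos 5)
Your proof is correct and follows essentially the same route as the paper's: a direct coordinate-by-coordinate computation of ${\rm Diff}(\bx')$ from the definition, with the telescoping identity $(x_{i-1}-x_i)+(x_i-x_{i+1})=x_{i-1}-x_{i+1} \ppmod{q}$ at the merged position. If anything, you are more careful than the paper, which states ${\rm Diff}(\bx')_{i-1}=x_{i-1}-x_{i+1}$ uniformly for $2\le i\le n$ (tacitly invoking the nonexistent $x_{n+1}$ when $i=n$), whereas you handle the endpoint $i=n$ separately via $y_{n-1}+y_n=(x_{n-1}-x_n)+x_n=x_{n-1} \ppmod{q}$.
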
 

\begin{proof}
We have $\by={\rm Diff}(\bx)$, where $y_i=x_i - x_{i+1} \ppmod{q}$ for $1\le i\le n-1$ and $y_n=x_n$. 
\vspace{2mm}

If $i=1$, i.e. $x_1$ is deleted in $\bx$, we then have $\bx'=x_2x_3\ldots x_n$. Clearly, ${\rm Diff}(\bx')=y_2y_3\ldots y_n$, or $y_1$ is deleted in ${\rm Diff}(\bx)$. 
\vspace{2mm}

If $2\le i\le n$, a deletion at $x_i$ affects $y_{i-1}, y_i$ in ${\rm Diff}(\bx)$, as $y_{i-1}=x_{i-1}-x_i \ppmod{q}$ and $ y_i=x_i-x_{i+1} \ppmod{q}$. We observe that the change in ${\rm Diff}(\bx')$ is then 
\begin{align*}
{\rm Diff}(\bx')_{i-1} &= x_{i-1}- x_{i+1} \ppmod{q} \\ 
 &= (x_{i-1}-x_{i})+ (x_{i}- x_{i+1})\ppmod{q} \\ 
 &= y_{i-1} + y_i \ppmod{q}.
\end{align*} 
We conclude that $y_{i-1} y_{i}$ is replaced by $y_{i-1} + y_i \ppmod{q}$. 
\end{proof}

\begin{example}
Consider  $\Sigma_4=\{0,1,2,3\}$, and $\bx=0{\color{red}{2}}11301$. We then have  $\by={\rm Diff}(\bx)= {\color{blue}{2 1}} 0 2 3 31$. Suppose that the symbol 2 is deleted in $\bx$, resulting $\bx'=011301$, and ${\rm Diff}(\bx')={\color{green}{3}}02331$. In this example, we observe that, ${\color{red}{x_2}}$ is deleted in $\bx$, and the resulting ${\color{blue}{y_{1} y_{2}=21}}$ in ${\rm Diff}(\bx)$ is replaced by ${\color{green}{3}}=y_{1} + y_2$.  
\end{example}


\begin{construction}[New version of $q$-ary VT codes]
Given $n>0$. For $q\ge 2, a\in \bbZ_{qn}$, set
\begin{equation*} 
{\rm VT}^{*}_{a}({n;q}) \triangleq \big\{ \bx \in \Sigma_q^n: {\rm Syn}({\rm Diff}(\bx)) = a  \ppmod{qn} \big\}.
\end{equation*}
\end{construction} 

The following lemma is crucial to the correctness of our error-decoding algorithm.
\begin{lemma}[Parity check lemma]\label{sumsymbol}
Given $n>0$, $q\ge 2,$ and $a\in \bbZ_{qn}$. Consider $\bx\in \Sigma_q^n$ such that ${\rm Syn}({\rm Diff}(\bx)) = a  \ppmod{qn}$. We then have $\sum_{i=1}^n x_i \equiv a \ppmod{q}.$
\end{lemma}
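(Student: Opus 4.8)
The plan is to express the coordinate sum $\sum_{i=1}^n x_i$ directly in terms of the syndrome ${\rm Syn}({\rm Diff}(\bx))$ by means of the explicit inversion formula for ${\rm Diff}$ already recorded in the excerpt, and then to observe that reducing modulo $qn$ forces the desired congruence modulo $q$. No deletion/insertion reasoning (in particular, not Lemma~\ref{trans-lemma1}) is needed: this is a purely algebraic identity relating the syndrome of ${\rm Diff}(\bx)$ to the sum of $\bx$.

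First I would invoke the inverse differential formula $x_i = \sum_{j=i}^n y_j \ppmod{q}$, which holds for every $i \in [n]$ (including $i=n$, since $x_n = y_n$). Summing this relation over all $i$ gives
\begin{equation*}
\sum_{i=1}^n x_i \equiv \sum_{i=1}^n \sum_{j=i}^n y_j \ppmod{q}.
\end{equation*}
The central step is to interchange the order of summation on the right-hand side: a fixed index $j$ contributes $y_j$ exactly once for each $i$ with $i \le j$, hence $j$ times in total. The double sum therefore collapses to $\sum_{j=1}^n j\,y_j$, which is precisely ${\rm Syn}(\by) = {\rm Syn}({\rm Diff}(\bx))$. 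Consequently,
\begin{equation*}
\sum_{i=1}^n x_i \equiv {\rm Syn}({\rm Diff}(\bx)) \ppmod{q}.
\end{equation*}

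Finally, since $q \mid qn$, the hypothesis ${\rm Syn}({\rm Diff}(\bx)) \equiv a \ppmod{qn}$ immediately yields ${\rm Syn}({\rm Diff}(\bx)) \equiv a \ppmod{q}$; combining this with the previous display gives $\sum_{i=1}^n x_i \equiv a \ppmod{q}$, as claimed.

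The only point demanding care -- which I would flag as the crux rather than a genuine obstacle -- is the bookkeeping in the interchange of summation together with the fact that the $x_i$ and $y_j$ are integer representatives in $\{0,\ldots,q-1\}$. The equalities above are congruences modulo $q$ and not integer identities, so one should carry the $\ppmod{q}$ throughout and only appeal to the integer-valued syndrome ${\rm Syn}(\by)=\sum_{j=1}^n j\,y_j$ at the very last step, where the passage from modulus $qn$ to modulus $q$ is made.
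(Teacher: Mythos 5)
Your proof is correct and takes essentially the same approach as the paper: both establish the purely algebraic congruence ${\rm Syn}({\rm Diff}(\bx)) \equiv \sum_{i=1}^n x_i \ppmod{q}$ and then descend from modulus $qn$ to modulus $q$ using $q \mid qn$, with no deletion reasoning involved. The only difference is directional and cosmetic: the paper substitutes $y_i \equiv x_i - x_{i+1} \ppmod{q}$ into the syndrome and telescopes, whereas you substitute $x_i \equiv \sum_{j=i}^n y_j \ppmod{q}$ into the coordinate sum and interchange the order of summation---the same identity computed in reverse.
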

\begin{proof}
Let $\by={\rm Diff}(\bx)$, where $y_i=x_i - x_{i+1} \ppmod{q}$ for $1\le i\le n-1$ and $y_n=x_n$. Suppose that ${\rm Syn}(\by)=a+kqn$ for some positive integer $k$. We have 
\begin{small}
\begin{align*}
{\rm Syn}(\by) &= \sum_{i=1}^{n-1} iy_i + ny_n  \\
&\equiv \sum_{i=1}^n i(x_i - x_{i+1}) + n x_n \ppmod{q} \\
&\equiv \sum_{i=1}^n x_i \ppmod{q}. 
\end{align*}
\end{small}
Since ${\rm Syn}(\by)=a+kqn$, it implies $\sum_{i=1}^n x_i \equiv a \ppmod{q}.$
\end{proof} 

\begin{theorem}\label{mainresult}
The code ${\rm VT}^{*}_{a}({n;q})$ can correct a single deletion or single insertion in linear time. In other words, there exists a linear-time decoder $\dec_{\rm error}: \Sigma_q^{n-1} \cup \Sigma_q^{n+1} \to \Sigma_q^n$ such that if $\bx'$ is obtained from $\bx \in {\rm VT}^{*}_{a}({n;q})$ after a deletion or an insertion, we can recover $\bx=\dec_{\rm error}(\bx')$. In addition, there exists $a\in \bbZ_{qn}$, such that $\big|{\rm VT}^{*}_{a}({n;q}) \big| \ge \frac{q^n}{qn}$.
\end{theorem}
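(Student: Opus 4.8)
The plan is to establish all three claims---single-deletion correction, a linear-time decoder, and the size bound---by passing entirely to the differential domain. Since ${\rm Diff}$ is a bijection, I would first observe that the map $\bx \mapsto {\rm Syn}({\rm Diff}(\bx)) \bmod qn$ partitions $\Sigma_q^n$ into the $qn$ classes ${\rm VT}^{*}_{a}(n;q)$, $a\in\bbZ_{qn}$; the pigeonhole principle then immediately yields a class of size at least $q^n/(qn)$, settling the cardinality claim. For the error-correction claim, the deletion/insertion duality of Levenshtein cited above lets me build a decoder for a single deletion only, so I would assume the decoder receives $\bx'\in\Sigma_q^{n-1}$ obtained from some unknown $\bx\in{\rm VT}^{*}_{a}(n;q)$ by deleting $x_i$, and set $\by'={\rm Diff}(\bx')$.

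The decoder recovers the deleted symbol and its location in two separate steps. First the value: by the Parity check lemma (Lemma~\ref{sumsymbol}), $\sum_j x_j \equiv a \pmod q$, and since the deletion removes exactly the value $v:=x_i$, we get $v \equiv a - \sum_j x'_j \pmod q$, which determines $v\in\{0,1,\ldots,q-1\}$ exactly. Second the location: I would compute the syndrome deficit $\Delta \equiv a - {\rm Syn}(\by') \pmod{qn}$. Using Lemma~\ref{trans-lemma1}---a deletion at $x_i$ merges $y_{i-1}y_i$ into $(y_{i-1}+y_i)\bmod q$ for $2\le i\le n$, and drops $y_1$ for $i=1$---a direct computation gives
\begin{equation*}
\Delta = \sum_{j=i}^{n} y_j + (i-1)q\,\epsilon, \qquad \epsilon\in\{0,1\},
\end{equation*}
where $\epsilon=1$ exactly when the merge wraps around modulo $q$ (and $\epsilon=0$ when $i=1$). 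A key preliminary step is to bound the right-hand side and verify it lies in $[0,qn)$, so that reading $\Delta$ modulo $qn$ recovers this integer exactly, with no wraparound ambiguity. Since the suffix sum telescopes modulo $q$, one also has $\sum_{j=i}^n y_j \equiv x_i = v \pmod q$, a consistency check against the value found in the first step.

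The crux---and the step I expect to be the main obstacle---is to show that $\Delta$ pins down the deletion location uniquely, up to the harmless ambiguity of deleting inside a run of equal symbols (which produces the same $\bx'$ and hence the same reconstruction). For each candidate position $i$ the decoder can reconstruct the relevant differential entries $y_{i-1}=(x'_{i-1}-v)\bmod q$ and $y_i=(v-x'_i)\bmod q$ from $\bx'$ and $v$, hence can compute both the suffix sum from the known suffix sums of $\by'$ and the carry $\epsilon$; this turns the candidate deficit into a fully determined function of $i$. I would then prove uniqueness by a monotonicity-and-range analysis: within the positions sharing a fixed value of $\epsilon$, the relevant suffix sum is non-increasing in $i$, so any tie forces the candidate positions to lie in a common run of the value $v$, across which all reconstructions coincide; meanwhile the large position-dependent jump $(i-1)q$, together with the bound $\Delta<qn$, is what rules out spurious collisions between the $\epsilon=0$ and $\epsilon=1$ regimes. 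Handling the carry $\epsilon$---the genuinely non-binary feature absent from the classical binary VT analysis---is the delicate part.

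Finally, I would note that every step (forming ${\rm Diff}(\bx')$, a single pass for the suffix sums and the syndrome, recovering $v$, and scanning candidate positions) uses $O(n)$ symbol operations, giving the claimed linear-time decoder; the insertion case ($\bx'\in\Sigma_q^{n+1}$) then follows verbatim by the deletion/insertion duality.
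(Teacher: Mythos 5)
Your proposal is correct and follows the paper's own proof essentially step for step: the pigeonhole argument for the size bound, the parity-check lemma (Lemma~\ref{sumsymbol}) to recover the deleted value, and the syndrome deficit $\Delta = a - {\rm Syn}(\by') \ppmod{qn}$ located via a monotonicity argument that splits on whether the merged pair $y_{i-1}+y_i$ wraps modulo $q$ --- your carry $\epsilon\in\{0,1\}$ is exactly the paper's cases (2a)/(2b), and your observations about run ambiguity and linear (indeed, for the location search, logarithmic) time appear in the paper as well. The only cosmetic difference is that you package the two cases into the single identity $\Delta=\sum_{j=i}^{n} y_j+(i-1)q\epsilon$ with an explicit check that $\Delta\in[0,qn)$, whereas the paper separates the two regimes by comparing $\Delta$ against the total sum $s=\sum_{j=1}^{n-1}y'_j$ of the received differential vector.
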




\begin{proof}
Observe that the lower bound is verified by using the pigeonhole principle. It remains to show that the code ${\rm VT}^{*}_{a}({n;q})$ can correct a single deletion in linear time. 

For a codeword $\bx\in {\rm VT}^{*}_{a}({n;q})$, let $\bx'$ be obtained from $\bx$ after a deletion at $x_i$. According to Lemma~\ref{sumsymbol}, we can obtain the value of the deleted symbol as follows: $x_i = a-\sum_{j=1}^{n-1} x_j' \ppmod{q}$. 

It remains to determine the value of $i$, i.e. the location of the deleted symbol. Let $\by={\rm Diff}(\bx)$ and $\by' = {\rm Diff}(\bx')$. We then compute: 
\begin{align*}
\Delta  &= {\rm syn}(\by)-{\rm Syn}(\by')=a-{\rm Syn}(\by') \ppmod{qn}, \text{ and}\\
s &=\sum_{j=1}^{n-1} y_j', \text{ i.e. the sum of symbols in } \by'. 
\end{align*}
Observe that both $a$ and $\by'$ are known, hence, the values of $\Delta$ and $s$ can be determined.

Let $s_R$ be the sum of symbols on the right of error symbol $y_i$ in $\by$. We show how $\by$ can be recovered from $\by'$ and thus $\bx$ can be recovered based on $\Delta$ and $s$, which are computable at the decoder. We now have the following cases. 

\noindent{\bf Case 1.} If $i=1$, we must have $\Delta= x_1 + \sum_{j=1}^{n-1} y_j' = x_1+s$. 

\noindent{\bf Case 2.} If $2\le i\le n$, according to Lemma~\ref{trans-lemma1},  $y_{i-1} y_{i}$ is replaced by $y_{i-1} + y_i \ppmod{q}$. 
\begin{itemize}
\item (2a) If $y_i+y_{i+1} \le q-1$, then it is easy to verify that $\Delta=s_R < s$.
\item (2b) If $q\le y_i+y_{i+1} \le 2(q-1)$, then it is easy to verify that $\Delta=iq+s_R > s$.
\end{itemize}
\begin{figure}[h]
\begin{center}
\includegraphics[width=8cm]{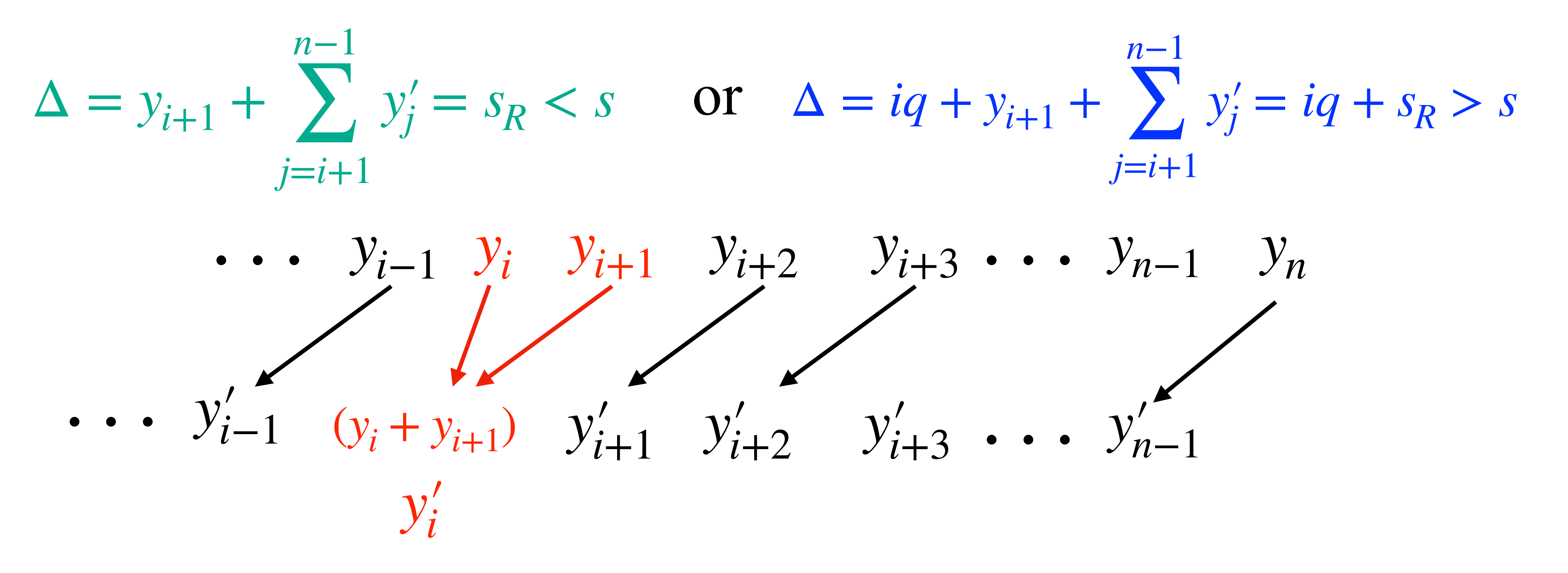}
\end{center}
\caption*{{\bf Claim}: Given $\Delta$ and $s$, we can locate the deleted symbol $x_i$.}
\end{figure}
Therefore, given the computed values $\Delta$ and $s$, we can distinguish (2a) and (2b). Moreover, observe that both $s_R$ and $iq+s_R$ are monotonic functions in the index $i$. Particularly, it is easy to verify that $s_R$ is decreasing in the index $i$ while $iq+s_R$ is increasing function in the index $i$. Hence, $\Delta$ is decreasing in the case (2a) while it is increasing in the case (2b). In other words, given the value of $x_i$, there is a unique value of $i$ according to the value of $\Delta$. It is easy to see that, in the case when the deleted symbol belongs to a run of identical symbols, we then have more than one options for the index $i$. Nevertheless, we obtain the same codeword. Consequently, to locate the error in $\by$, for (2a), the decoder scans $\by'$ and simply searches for the first index $h$ where $\sum_{j=h}^{n-1} y_j'>\Delta$, while for (2b), the decoder scans $\by'$ and simply searches for the largest index $h$ where $qh+\sum_{j=h}^{n-1} y_j'<\Delta$. The error location in $\bx$ is then $i=h+1$.

In conclusion, the code ${\rm VT}^{*}_{a}({n;q})$ can correct a single deletion (or equivalently, a single insertion) in linear time.
\end{proof} 
\begin{remark}
The advantage of our proposed codes design, besides the simplicity of the constraint, is that it enables $O(\log n)$ time to find the error location, since $\Delta$ is a monotonic function in the index $i$. 
In addition, one may construct ${\rm VT}^{*}_{a}({n;q})$ using different variations of the differential function ${\rm Diff}(\bx)$. In general, for all values $p$, $1\le p\le q-1$ and ${\rm gcd}(p,q)=1$, this coding method works for all {\em $p$-transformation vector} $\Gamma_p(\bx)$, defined as follows. 
\begin{equation*}
\left\{ \begin{array}{ll}
y_i &=p(x_i - x_{i+1}) \ppmod{q} \mbox{, for } 1\le i\le n-1, \\ 
y_n &=px_n.
\end{array}\right.
\end{equation*}
\end{remark}

We now illustrate the error-decoding procedure through the following examples.
\begin{example}
Given $n=10, q=4, a=0$, $\Sigma_4=\{0,1,2,3\}$. Consider a codeword $\bx=0103112013 \in {\rm VT}^{*}_{a}({n;q})$. We obtain $\by={\rm Diff}(\bx)=3112032323$. It is easy to verify that ${\rm Syn}(\by)=120\equiv 0 \ppmod{40}$ and $\sum_{i=1}^{10} x_i \equiv 0 \ppmod{4}$. 

Suppose that we receive $\bx'=013112013$, i.e. a deletion occurs at $x_3=0$. We then obtain $\by'={\rm Diff}(\bx')=322032323$. We verify that $y_2y_3=11$ has been changed to $y_2'=y_2+y_3=2$ in $\by'$. Now, to correct $\bx$ and find out the value of $i$, we follow the decoding procedure in Theorem 3 as follows. 
\begin{itemize}
\item From $\bx'$, the decoder finds the value of the deleted symbol, which is $a-\sum_{i=1}^{n-1} x_i' = 0-(0+1+3+1+1+2+0+1+3)=0 \ppmod{4}$.
\item From $\by'={\rm Diff}(\bx')=322032323$, the decoder computes: 
\begin{small}
\begin{align*} 
\Delta &= a-{\rm Syn}(\by')=0-104=16 \ppmod{40}, \\
s &= \sum_{i=1}^{n-1} y_i'=3+2+2+3+2+3+2+3=20.
\end{align*}
\end{small}
\item Since $\Delta<s$, the decoder concludes that it belongs to the case (2a) where the deletion is not at the first position, i.e. $i\neq 1$, and $y_{i-1}+y_i<q=4$. 

\item Find the error location in $\by$. It can be observed that $\sum_{h=2}^9 y_i' =17 > \Delta=16$ while $\sum_{h=3}^9 y_i' =15 < \Delta$. The decoder then concludes that the error in $\by$ is at the $h=2$ position, and hence, the error in $\bx$ is at $i=h+1=3$.  

\item To correct $\bx$, it inserts the symbol $0$ to the third position. 
\end{itemize}

We now consider another case, where we receive a sequence $\bx'=010311213$, i.e. a deletion occurs at $x_8=0$. We then obtain $\by'={\rm Diff}(\bx')=311203123$. We verify that $y_7y_8=23$ has been replaced to $y_7'=y_2+y_3=1$ in $\by'$. Now, to correct $\bx$ and find out the value of $i$, we follow the decoding procedure in Theorem 3 as follows. 
\begin{itemize}
\item From $\bx'$, the decoder finds the value of the deleted symbol, which is $a-\sum_{i=1}^{n-1} x_i' = 0-(0+1+0+3+1+1+2+1+3)=0 \ppmod{4}$.
\item From $\by'={\rm Diff}(\bx')=311203123$, the decoder computes:
\begin{align*} 
\Delta &= a-{\rm Syn}(\by')=0-84=36 \ppmod{40}, \\
s &= \sum_{i=1}^{n-1} y_i'=3+1+1+2+3+1+2+3=16.
\end{align*}
\item Since $\Delta>s$, the decoder concludes that it belongs to the case (2b) where the deletion is not at the first position, i.e. $i\neq 1$, and $y_{i-1}+y_i>q=4$. 

\item Find the error location in $\by$. It can be observed that $7\times 4+\sum_{h=7}^9 y_i' =34 < \Delta=36$ while $8\times 4\sum_{h=8}^9 y_i' =37 > \Delta$. The decoder then concludes that the error in $\by$ is at the $h=7$ position, and hence, the error in $\bx$ is at $i=h+1=8$.  

\item To correct $\bx$, it inserts the symbol $0$ to the 8th position. 
\end{itemize}
\end{example}

\subsection{Systematic Encoder} 

It is easy to show that our constructed codes ${\rm VT}_a^*(n;q)$, from Construction 3, also support systematic linear-time encoder. The design is similar to the construction of the systematic encoder proposed by Tenengolts \cite{Tene:1984}. For message $\bx\in \Sigma_q^k$, the encoder appends the information of the VT syndrome of the differential vector of $\bx$ (of length $t+1=\ceil{\log_q k}+1$) into its suffix. In addition, there is a marker of length two, serves as separators between the data part and the redundancy part (refer to \cite{Tene:1984}). Due to space constraint, we defer the detailed construction of $\enc_1$ and $\dec_1$ in the full paper. We illustrate the main idea of the encoder as follows.

\begin{figure}[h]
\begin{center}
\includegraphics[width=8cm]{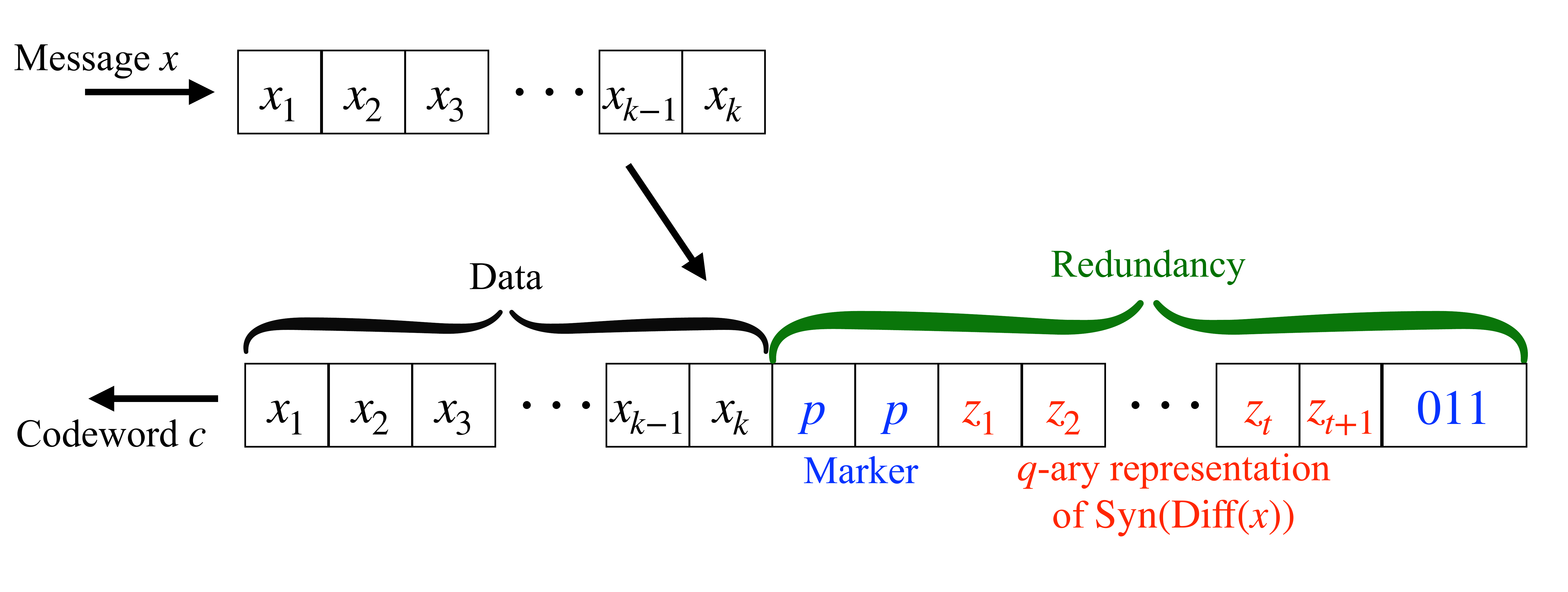}
\end{center}
\caption{Systematic encoder $\enc_1$. Here $t=\ceil{\log_q k}$. The combination ${\color{blue}{011}}$ at the end of the code sequence plays the role of the comma between transmitted sequences. The marker ${\color{blue}{pp}}$, where $p= x_k + 1 \ppmod{q}$ serves as separators between the data part and the redundancy part, while the ${\rm Syn}({\rm Diff}(\bx))$ is represented by ${\color{red}{z_1z_2\ldots z_tz_{t+1}}}$.} 
\label{fig1}
\end{figure}

\section{More Efficient Encoder and Decoder}

In this section, we present a linear-time encoder that encodes user messages into ${\rm VT}^{*}_{a}({n;q})$ with at most $\ceil{\log_q n} + 1$ redundant symbols. 
\vspace{1mm}

\noindent{\bf Encoder 2}. Given $n,q,$ and $a\in \bbZ_{qn}$, set $t\triangleq\ceil{\log_q n}$ and $k\triangleq n-t-1$. The user message is of length $k$.
\vspace{1mm}

{\sc Input}: $\bx\in \Sigma_q^m$\\
{\sc Output}: $\bc \triangleq \enc_2(\bx)\in {\rm VT}^{*}_{a}({n;q})$\\[-2mm]
\begin{enumerate}[(I)]
\item Set $S \triangleq \{q^{j-1} : j \in [t]\}\cup \{n\}$ and $I \triangleq [n]\setminus S$. In other words, the set $S$ includes the $n$th index and all the indices that are powers of $q$. 
\item Set $\by=y_1y_2\ldots y_n \in \Sigma_q^n$, where $\by|_I=\bx$ and $\by|_S=0$. In other words, the symbols in $\bx$ are filled into $\by$ excluding indices in $S$ (refer to Figure~\ref{fig1}) and $y_j=0$ for $j\in S$. 
\item Compute the difference $a'\triangleq a-{\rm Syn}(\by) \ppmod{qn}$. 

In the next step, we modify $\by$, by setting suitable values for $y_j$ where $j\in S$, to obtain ${\rm Syn}(\by)=a \ppmod{qn}$. 
Since $0\le a'\le qn-1$, we find $\alpha$, $0\le \alpha<q-1$, to be the number such that $\alpha n \le a' < (\alpha+1)n$. 

\item The values for $y_j$ where $j\in S$ are set as follows. 
\begin{itemize}
\item Set $y_n=\alpha$, and $a''=a'-\alpha n < n$.
\item Let $z_{t-1}\ldots z_1z_0$ be the $q$-ary representation of $a''$. Clearly, since $a''<n$, the $q$-ary representation of $a''$ is of length at most $t=\ceil{\log_q n}$. We then have $a'' = \sum_{i=0}^{t-1} z_i q^i$. 
\item Set $y_{q^{j-1}}=z_{j-1}$ for $j\in [t]$. 
\end{itemize}

\item Set $\bc={\rm Diff}^{-1}(\by)$. In other words, we set $c_n=y_n$ and $c_i=\sum_{j=i}^n y_j \ppmod{q}$ for $1\le i\le n$. 

\item Output $\bc$.
\end{enumerate} 

\begin{figure*}[h]
\begin{center}
\includegraphics[width=13cm]{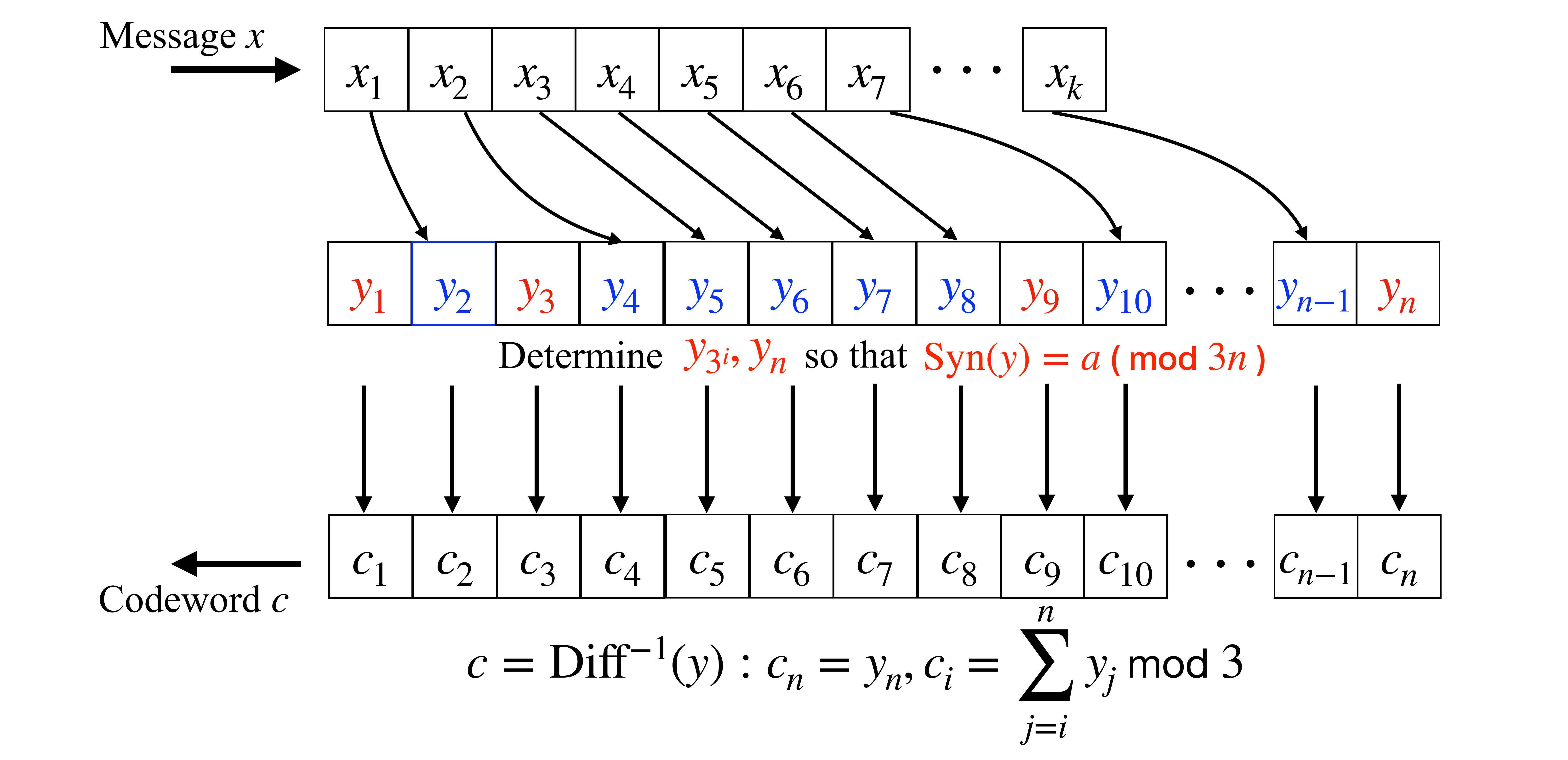}
\end{center}
\caption{Linear-time encoder to encode arbitrary messages into ${\rm VT}^{*}_{a}({n;q})$. In this example, $q=3$ and the VT syndrome ${\rm Syn(\by)}$ is computed in modulo $3n$ while each symbol is computed in modulo $3$. 
The message is of length $k=n-\ceil{\log_3 n}-1$.}
\label{fig1}
\end{figure*}

We illustrate Encoder 2 via an example. 

\begin{example}
Consider $n=10, q=3$ and $a=0$. 
Then $t=\ceil{\log_3 10}=3$ and $k=10-3-1=6$.
Suppose that the message is $\bx=220011$ and we compute $\enc_2(\bx)\triangleq \bc \in {\rm VT}^{*}_{0}({10;3})$.


\begin{enumerate}[(I)]
\item Set $S=\{1,3,9,10\}$ and $I=\{2,4,5,6,7,8\}$.
\item The encoder first sets $\by={\color{red}{y_1}}2{\color{red}{y_3}}20011{\color{red}{y_9}}{\color{red}{y_{10}}}$.
It then sets $y_1=y_3=y_9=y_{10}=0$ to obtain $\by={\color{red}{0}}2{\color{red}{0}}20011{\color{red}{00}}$ and computes $a'=a- {\rm Syn}(\by)=0-27= 3 \ppmod{30}.$
\item Since $0<a'=3<10$, the encoder sets $\alpha=0$ and $a''=a'=3$. It then sets $y_{10}=\alpha=0$. 
\item The $3$-ary representation of $3$ is then $010$. 
Therefore, the encoder sets $y_1=0$, $y_2=1$, and $y_9=0$ to obtain $\by=0212001100$. We can verify that ${\rm Syn}(\by)=0  \ppmod{30}$.
\item In the final step, the encoder outputs $\bc={\rm Diff}^{-1}(\by)=1122020100$. 
\end{enumerate}
\end{example}

\begin{theorem} Encoder 2 is correct and has redundancy $\ceil{\log_q n}+1$ symbols. 
In other words, $\enc_2(\bx)\in {\rm VT}^{*}_{a}({n;q})$ for all $\bx\in\Sigma_q^{n-\ceil{\log_q n}-1}$.
\end{theorem}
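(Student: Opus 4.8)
The plan is to establish the two assertions separately: the redundancy count, which is essentially bookkeeping, and the membership $\enc_2(\bx)\in{\rm VT}^{*}_{a}(n;q)$, which is the substance. The redundancy is immediate once Step~(II) is known to be well-defined: the input $\bx$ has length $k=n-t-1$ with $t=\ceil{\log_q n}$, the output $\bc$ has length $n$, so the redundancy is $n-k=t+1=\ceil{\log_q n}+1$. Before that count is meaningful, however, I would first check that $|I|=k$, so that the $k$ message symbols exactly fill the positions of $I$. This reduces to showing $|S|=t+1$, i.e.\ that $q^{0},q^{1},\ldots,q^{t-1}$ together with $n$ are $t+1$ distinct elements of $[n]$. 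Since $t=\ceil{\log_q n}$ we have $q^{t-1}<n\le q^{t}$, so for $q\ge 2$ the powers are pairwise distinct and lie in $[1,n-1]$, while $n$ strictly exceeds the largest power $q^{t-1}$; hence $S$ has exactly $t+1$ elements and $|I|=n-(t+1)=k$.

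For membership, the key observation is that ${\rm Diff}$ and ${\rm Diff}^{-1}$ are mutual inverses, so ${\rm Diff}(\bc)={\rm Diff}({\rm Diff}^{-1}(\by))=\by$; therefore $\bc\in{\rm VT}^{*}_{a}(n;q)$ is \emph{equivalent} to ${\rm Syn}(\by)=a\ppmod{qn}$ for the final vector $\by$. I would track the syndrome as the algorithm fills $S$. Writing $\by^{(0)}$ for the vector after Step~(II) (zeros on $S$, message on $I$), Step~(III) sets $a'\equiv a-{\rm Syn}(\by^{(0)})\ppmod{qn}$. Because Step~(IV) touches only the coordinates in $S$, each of which started at $0$, the total increment to the syndrome is precisely $\sum_{j\in S} j\,y_j=n\alpha+\sum_{j=1}^{t} q^{j-1}z_{j-1}$. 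Using $a''=\sum_{i=0}^{t-1}z_i q^{i}$ and $a'=\alpha n+a''$, this increment equals $\alpha n+a''=a'$, whence ${\rm Syn}(\by)\equiv{\rm Syn}(\by^{(0)})+a'\equiv a\ppmod{qn}$, which is exactly the claim.

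Two well-definedness checks deserve care along the way. First, every value written into $S$ must be a genuine symbol in $\Sigma_q=\{0,\ldots,q-1\}$: from $0\le a'\le qn-1$ the quotient $\alpha=\lfloor a'/n\rfloor$ satisfies $0\le\alpha\le q-1$, and since $a''<n\le q^{t}$ its $q$-ary digits $z_0,\ldots,z_{t-1}$ each lie in $\{0,\ldots,q-1\}$ and number at most $t$, so $y_n=\alpha$ and $y_{q^{j-1}}=z_{j-1}$ are legitimate. Second, the clean decomposition of the increment as a sum over $S$ relies on the positions being pairwise distinct, so that setting $y_n=\alpha$ does not overwrite any $y_{q^{j-1}}$; this is exactly the $|S|=t+1$ fact from the first paragraph, which in particular gives $n\neq q^{j-1}$ for all $j\in[t]$.

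The main obstacle is conceptual rather than computational: recognizing why placing the redundancy on the positions $\{1,q,q^{2},\ldots,q^{t-1},n\}$ lets the syndrome $\sum_i i\,y_i$ realize \emph{any} prescribed residue $a'\in[0,qn)$. The powers of $q$ supply, through the $q$-ary expansion of $a''$, an arbitrary value in $[0,q^{t})\supseteq[0,n)$ via the weights $q^{j-1}$, while the single coordinate $n$ contributes the carry term $\alpha n$ needed to cover the range $[n,qn)$. Once this positional decomposition $a'=\alpha n+\sum_i z_i q^{i}$ is in place, the theorem follows from the routine range and distinctness verifications noted above.
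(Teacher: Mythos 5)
Your proof is correct and takes essentially the same route as the paper's: both reduce membership to showing ${\rm Syn}(\by)=a\ppmod{qn}$ via $\by={\rm Diff}(\bc)$, and verify this by splitting the syndrome into the contribution of the message positions $I$ (which equals $a-a'$) and of the redundancy positions $S$ (which equals $\alpha n+a''=a'$), your ``increment from the zero-filled vector'' phrasing being the same computation. Your added well-definedness checks ($|S|=t+1$ via $q^{t-1}<n\le q^{t}$, the symbol ranges of $\alpha$ and the digits $z_i$) are sound refinements the paper leaves implicit, and your bound $0\le\alpha\le q-1$ corrects the strict inequality $0\le\alpha<q-1$ stated in the paper's Step (III).
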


\begin{proof}
It suffices to show that ${\rm Syn}({\rm Diff}(\bc))=a \ppmod{qn}$. From Step (V) of the Encoder 2, $c={\rm Diff}^{-1}(\by)$, in other words, $\by={\rm Diff}(\bc)$. It remains to show that ${\rm Syn}(\by)=a \ppmod{qn}$. 

Recall from Step (I) of the Encoder 2 that $S \triangleq \{q^{j-1} : j \in [t]\}\cup \{n\}$ and $I \triangleq [n]\setminus S$. We have
\begin{small}
\begin{align*}
{\rm Syn}(\by) &= \sum_{j\in S}  jy_j+ \sum_{j\in I} jy_j \ppmod{qn}\\
&= \sum_{j\in [t]}  q^{j-1} y_j+ n y_n+ \sum_{j\in I} jy_j \ppmod{qn}\\
&= a'' + n \alpha + (a-a') \ppmod{qn}\\
&= a'-\alpha n + n \alpha + a-a' \ppmod{qn}\\
&= a \ppmod{qn} \qedhere
\end{align*} 
\end{small}
\end{proof}

For completeness, we state the corresponding decoder as follows.

\noindent{\bf Decoder 2}. Given $n,q,$ and $a\in \bbZ_{qn}$, $t\triangleq\ceil{\log_q n}$ and $k\triangleq n-t-1$. Given $\bc=\enc_1(\bx)$ for some message $\bx\in \Sigma_q^k$. 

{\sc Input}: $\bc'\in \Sigma_q^{n-1}\cup \Sigma_q^{n}\cup \Sigma_q^{n+1}$\\
{\sc Output}: $\bx=\dec_2(\bc') \in\Sigma_q^k$\\[-2mm]
\begin{enumerate}[(I)]
\item The decoder follows the error-decoding procedure in Theorem 3 to obtain  $\bc\triangleq \dec_{\rm error}(\bc') \in \Sigma_q^n$.
\item Set $\by={\rm Diff}(\bc) \in \Sigma_q^n$, $y_i=c_i-c_{i+1} \ppmod{q}$ for $1\le i\le n-1$ and $y_n=c_n$.
\item Set $S \triangleq \{q^{j-1} : j \in [t]\}\cup \{n\}$ and $I \triangleq [n]\setminus S$.
\item Output $\bx=\by|_I \in \Sigma_q^k$.
\end{enumerate}

\section{Conclusion}
We have presented a new construction method of non-binary VT codes that are capable of correcting a single deletion or single insertion. We have further proposed efficient linear-time encoders that encode user messages into these codes of length $n$. Particularly, for codewords of length $n$, over the $q$-ary alphabet for $q>2$, our best designed encoder uses $\ceil{\log_q n} + 1$ redundant symbols, which improves the redundancy of the best known encoder for single deletion or single insertion correction codes, proposed by Tenengolts \cite{Tene:1984}, by $2 +\log_q (3)$ redundant symbols, or equivalently $2 \log_2 q + 3$ redundant bits. Our constructed codes also support systematic linear-time encoding and decoding procedures. 

To conclude, we discuss an open problem, related to $q$-ary codes correcting a burst of at most two deletions, which is deferred to our future research work. A similar version of the differential vector was used in \cite{le:2del} for binary codes to correct a burst of at most two deletions or two insertions, that incur $\log n+O(1)$ redundant bits. Recently, Wang \et{} \cite{2del:2} extended the construction in \cite{le:2del} to construct $q$-ary codes, that incur at most $\log n+O(\log \log n)$ bits for codewords of length $n$. A natural question is can one obtain a $q$-ary code with only $\log n+O(1)$ redundant bits? 


\end{document}